\newcommand{\mc}{\ensuremath{\mathcal M}}
\newcommand{\ac}{\ensuremath{\mathcal A}}
\newcommand{\dbr}{\ensuremath{\}\!\}}}
\newcommand{\dbl}{\ensuremath{\{\!\{}}
\theoremstyle{definition}
\newtheorem{definition}{Definition}
\theoremstyle{plain}
\newtheorem{theorem}{Theorem}
\newtheorem{proposition}{Proposition}
\newtheorem{corollary}{Corollary}
\newtheorem{remark}{Remark}
\title{Double Poisson brackets on low dimensional algebras}
\author{G.I. Sharygin \\ A. Hernandez Rodriguez}
\date{December 2024-January 2025}
\begin{document}

\maketitle
\begin{center}
{\textbf{Abstract}}
\end{center}

In this paper, we discuss the structure of double Poisson brackets in the sense of M. Van den Bergh on finite-dimensional algebras. In particular (using Hochschild homology and some results from \cite{vdb}) we prove that all possible double Poisson brackets on matrix algebras are given by inner derivations of appropriate type. As a corollary of this result, we see that all possible double Poisson brackets in any finite-dimensional semisimple algebra are also given by inner derivations. We further give a description of all double Poisson brackets on a small non-semisimple algebra (namely, the algebra of $2\times 2$ upper triangular matrices); in addition, we discuss Poisson structures induced from the double Poisson brackets in its representation spaces of dimensions two and three. Further following the development of this formalism, suggested by S. Arthamonov (see \cite{sart}), we compute all modified double Poisson brackets on this algebra, and the Poisson structures induced by these. 

\tableofcontents

\section{Introduction}
A classical approach to the study of integrable systems, based upon the Hamiltonian formalism, is to consider smooth manifold $M$ that substitutes the phase space, its ring of smooth functions $\mathcal{C}^\infty(M)$ that will represent the observables, and a Poisson bracket on $M$ which will induce the dynamics of the observables. Recall now that Poisson bracket is a (bi)linear operator 
\[
\{ , \}: \mathcal{C}^\infty(M) \otimes \mathcal{C}^\infty(M)\to \mathcal{C}^\infty(M), \]
with the following properties:
\begin{enumerate}
            \item Skew symmetry: $\{ f,g \}=-\{ g,f\}$
            \item Leibniz rule: $\{ fg,h\} =f\{ g,h\}+\{ f,h\} g$
            \item Jacobi identity: $\{\{f,g\},h\}+\{\{g,h\},f\}+\{\{h,f\},g\}=0$.
        \end{enumerate}
In effect from algebraic point of view, one can dispense with the manifold and still have a meaningful theory: the notion of Poisson structure can be generalized to a bilinear operation on an associative (not necessarily commutative) algebra $\mathcal{A}$ over a field $\mathbb{K}$ of characteristic zero: we shall say that $\{,\}$ is a \textit{Poisson bracket on \ac} if $\{,\}$ is a $\mathbb{K}$-(bi)linear operator
\[
\{,\}:\mathcal{A}\otimes \mathcal{A} \to \mathcal{A},
\]
satisfying the same properties of skew symmetry, Leibniz rule and Jacobi identity as above. Observe that the order of factors in the Leibniz rule is important if \ac\ is noncommutative; we will assume that they are fixed as it is written in the formula above.

\begin{remark}\rm
 When we fix one of the arguments in a Poisson bracket, we obtain a derivation of the algebra \ac; now recall that for semi-simple algebras all derivations are inner, which means that $\{a,b\}=[f(a),b]=[a,g(b)]$, where $[,]$ is the commutator and $f,g$ are $\mathbb{K}$-linear functions. From this and from the Jacobi identity one can deduce that in this case the only type of Poisson bracket there exists on \ac\ has the form $\{a,b\}=\lambda[a,b]$, where $[,]$ denotes the commutator of elements in the algebra and $\lambda\in\mathbb K$ is a constant. On the other hand, Poisson structures on free commutative and noncommutative algebras are quite rich and are subject of active study.
\end{remark}
From the remark, we can notice that interesting examples of Poisson structures for finite-dimensional noncommutative algebras can be rather limited; the purpose to study such structures arises from Noncommutative Geometry, where they can be used to develop the theory of ``integrable systems'', similarly to the usual (commutative) case. 

On the other hand, with every finitely-presented (in particular, every finite-dimensional) noncommutative algebra \ac\ one can associate an infinite series of \textit{representation spaces}:
\[
\mathrm{Rep}_n(\ac)=\{f:\ac\to\mathrm{Mat}_n(\mathbb K)\mid f(ab)=f(a)f(b)\}.
\]
These sets bear the structure of affine algebraic varieties: in effect, if \ac\ is generated by a finite set $a_1,\dots,a_N$ verifying the equalities $r_1,\dots,r_K$ (where every $r_i=r_i(a_1,\dots,a_N)$ is a noncommutative polynomial), then every homomorphism $f:\ac\to\mathrm{Mat}_n(\mathbb K)$ is uniquely determined by a finite set of matrices $f(a_1),\dots,f(a_N)\in\mathrm{Mat}_n(\mathbb K)$, which can be chosen arbitrarily as long as the equalities $r_i(f(a_1),\dots,f(a_N))=0,\,i=1,\dots,K$ hold. The group $GL_n(\mathbb K)$ acts on these spaces by conjugations and one can consider the quotient space 
\[
\widetilde{\mathrm{Rep}}_n(\ac)= \mathrm{Rep}_n(\ac)/Ad_{GL_n(\mathbb K)}.
\]
The representation spaces of certain algebras, such as the path algebras of quivers, play an important role in the theory of integrable systems serving as the phase spaces of important integrable systems; in particular they are equipped with Poisson structures; these Poisson structures have been extensively studied by various mathematicians in the last 30 years. 


%
It turns out that these structures arise from certain constructions related to the initial algebra \ac. Namely in 2007, Michel Van den Bergh \cite{vdb} introduced the concept of double Poisson brackets on an (not necessarily commutative) associative algebra; as is explained in the introduction to the cited paper, one of the incentives of this was to provide a suitable framework to explain the existence of Poisson structures on representation spaces of an algebra, with the ultimate purpose of studying dynamical systems on these spaces and relating their properties to the properties of the corresponding algebras. In the aforementioned article, M. Van den Bergh discussed other analogous structures and concepts related to a (possibly noncommutative) algebra, for instance: polyvector fields, Schouten-Nijenhuis brackets, Hamiltonians and others. He used this formalism to describe the canonical Poisson structures on representation spaces of quiver algebras. Later A.~Odeskii, V.~Rubtsov, and V.~Sokolov, classified and described double Poisson brackets on free associative algebras, and gave examples of noncommutative Hamiltonian integrable systems in terms of this formalism \cite{ors}. More recently B. Wang, and S. Li, used it to obtain a Lax pair of a discrete noncommutative dynamical system \cite{WL}.

There exist many generalizations and modifications of double Poisson brackets. For instance, in  \cite{sart} S. Arthamonov introduced a non-skew-symmetric generalization of a double Poisson bracket, the Modified Double Poisson bracket. Determined by the map into ``universal trace space'' (the quotient $\ac/ [ \ac,\ac]$) it also allows to induce a Poisson structure on moduli spaces of the space of representations of an algebra. Another approach was suggested by W. Crawley-Boevey, \cite{CBW}. All these constructions give insight into the way integrable systems on representation spaces behave.

In this paper, we attempt to study double Poisson structures (and some of their generalizations) on certain finite-dimensional algebras. We begin with the classification of double brackets on matrix algebras (and more generally any finite-dimensional semisimple algebras); as one can naturally expect, these structures are all given by commutators (up to a scalar factor). In particular, they induce trivial Poisson structures on representation spaces. Thus the same is true for all finite-dimensional semisimple algebras over real or complex numbers, since they are equal to direct sums of matrix algebras over real or complex fields, or quaternions by Wedderburn-Artin theorem. After this, we discuss the simplest non-semisimple algebra: the algebra of upper-triangular matrices over complex (or real) numbers. In this case there is essentially only one example of double Poisson structures, which induces nontrivial (but rather simple) Poisson structures on representation spaces; we describe these structures for representations in degrees $2$ and $3$. Finally, we describe Arthamonov's generalized double Poisson structures on the same algebra; we show that it looks similar to Van den Bergh's one, which we discussed earlier.

\section{Double Poisson brackets}

\subsection{Definitions and basic properties}
The following definition and propositions are taken from the paper \cite{vdb}:
\begin{definition}{\cite{vdb}}
Let \ac\ be an associative algebra over $\mathbb K$. A double Poisson bracket on \ac\ is a $\mathbb{K}$ (bi)linear map $\dbl-,-\dbr : \ac \otimes \ac \to \ac \otimes \ac $, such that the next conditions hold:
    \begin{itemize}
        \item This map is anti-symmetric in the following sense
        \[ \dbl a,b \dbr = -\dbl b,a \dbr^\circ,\]
        where $(a\otimes b)^\circ := b\otimes a$.
        \item This map is a derivation in its second argument with respect to the outer (bi)module
        structure on $\ac\otimes \ac$:
        \[\dbl a,bc \dbr=(b\otimes 1)\dbl a,c \dbr + \dbl a,b\dbr(1\otimes c).\]
        \item The \textit{double Jacobi identity} holds
        \[ \dbl a,\dbl b,c\dbr\dbr_L + \tau_{(123)} \dbl b,\dbl c,a\dbr\dbr_L + \tau_{(132)} \dbl c,\dbl a,b\dbr\dbr_L =0.\]
Here we put 
\[
\dbl a,b\otimes c\dbr_L =\dbl a,b\dbr\otimes c,
\]
and $\tau_{(123)}$ is the cyclic permutation of tensor legs: $\tau_{(123)}(a\otimes b \otimes c)=c\otimes a\otimes b$. 
    \end{itemize}
A $\mathbb{K}$-linear map that satisfies the first two properties is called just \textit{double bracket}. An algebra, equipped with the double Poisson bracket is called \textit{double Poisson algebra}.
\end{definition}
One of the important properties of double Poisson algebras is their relation with usual Poisson structures. In particular, the following proposition gives a construction, relating a double Poisson bracket on an algebra \ac\ with a variant of Poisson bracket on this algebra; this Poisson bracket will take values in \ac, but it will be anti-symmetric only after descending to the ``universal trace space'' of \ac, i.e. the \textit{quotient space} $\ac_\flat=\ac/[\ac,\ac]$ (here $[\ac,\ac]$ denotes the space of commutators in \ac):
\begin{proposition}{\cite{vdb}}
Assume that $(\ac,\dbl -,-\dbr)$ is a double Poisson algebra. Let 
\begin{equation}\label{eq:poisstr1}
\{-,-\}:\ac\otimes \ac \to \ac:a\otimes b \mapsto \dbl  a,b \dbr'\dbl a,b\dbr'',
\end{equation}
where $\dbl a,b\dbr=\dbl  a,b \dbr'\otimes\dbl a,b\dbr''$. Then the following statements are true:
    \begin{itemize}
        \item $\{-,-\}$ is a derivation in its second argument.
        \item $\{-,-\}$ is anti-symmetric modulo commutators, i.e. the composition of the bracket $\{-,-\}$ with the natural projection to $\ac_\flat$ is anti-symmetric.
        \item $\{-,-\}$ satisfies the following version of the Jacobi identity
        \[
        \{a,\{b,c\}\}=\{\{a,b\},c\}+\{b,\{a,c\}\}.
        \]
        \item Composition of the section map $\ac_\flat=\ac/[\ac,\ac]\to\ac$, bracket $\{-,-\}$ and projection $\ac\to\ac_\flat$ induces a Lie algebra structure on $\ac_\flat$ (in particular, it is well-defined as a map $\ac_\flat\otimes\ac_\flat\to\ac_\flat$).
    \end{itemize}
Remark that for commutative algebras $\ac_\flat=\ac/[\ac,\ac]=\ac$, so in the commutative case we get a Poisson structure on \ac\ in the usual sense.
\end{proposition}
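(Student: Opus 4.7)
The plan is to derive each of the four statements directly from the defining properties of the double bracket, by applying the multiplication maps $m:\ac\otimes\ac\to\ac$ and $m_3:\ac^{\otimes 3}\to\ac$, $x\otimes y\otimes z\mapsto xyz$, together with anti-symmetry.

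For the first claim, applying $m$ to both sides of the second-argument Leibniz rule $\dbl a,bc\dbr=(b\otimes 1)\dbl a,c\dbr+\dbl a,b\dbr(1\otimes c)$ immediately gives $\{a,bc\}=b\{a,c\}+\{a,b\}c$ in one line. For the second claim, if $\dbl a,b\dbr=\sum x_i\otimes y_i$, then the anti-symmetry property gives $\dbl b,a\dbr=-\sum y_i\otimes x_i$, so $\{a,b\}+\{b,a\}=\sum(x_iy_i-y_ix_i)\in[\ac,\ac]$.

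The Jacobi-type identity is the main technical step. I apply $m_3$ to the double Jacobi identity and track the terms in Sweedler notation $\dbl b,c\dbr=b_c'\otimes b_c''$, $\dbl a,c\dbr=a_c'\otimes a_c''$, $\dbl a,b\dbr=a_b'\otimes a_b''$. A direct calculation shows $m_3(\dbl a,\dbl b,c\dbr\dbr_L)=\{a,b_c'\}\,b_c''$, and after invoking anti-symmetry to rearrange the cyclic-permuted summands, one obtains
\[\{a,b_c'\}\,b_c''\;-\;a_c'\,\{b,a_c''\}\;-\;\dbl a_b',c\dbr'\,a_b''\,\dbl a_b',c\dbr''\;=\;0.\qquad(*)\]
This identity involves only three of the six Sweedler summands that appear after expanding $\{a,\{b,c\}\}$, $\{b,\{a,c\}\}$ and $\{\{a,b\},c\}$ via the Leibniz rules. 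To obtain the remaining three, I apply $m_3$ to the double Jacobi identity for the ordered triple $(b,a,c)$; a similar bookkeeping, combined with anti-symmetry, produces the companion identity
\[b_c'\,\{a,b_c''\}\;-\;\{b,a_c'\}\,a_c''\;-\;\dbl a_b'',c\dbr'\,a_b'\,\dbl a_b'',c\dbr''\;=\;0.\qquad(**)\]
Summing $(*)$ and $(**)$, then collecting terms via the derivation property of $\{a,-\}$ and $\{b,-\}$ (for the left-hand side) and via the inner-bimodule first-argument Leibniz rule for $\dbl-,c\dbr$ (derived from the second-argument Leibniz rule by transposition under anti-symmetry, and which yields $\{\{a,b\},c\}=\dbl a_b'',c\dbr'\,a_b'\,\dbl a_b'',c\dbr''+\dbl a_b',c\dbr'\,a_b''\,\dbl a_b',c\dbr''$), one obtains exactly $\{a,\{b,c\}\}=\{\{a,b\},c\}+\{b,\{a,c\}\}$. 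The main obstacle is the careful tracking of signs when anti-symmetry converts $L$-type derivations into $R$-type ones.

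For the fourth claim, I first observe that $\{a,-\}$ preserves $[\ac,\ac]$: by the second-argument Leibniz rule,
\[\{a,b_1b_2-b_2b_1\}=[b_1,\{a,b_2\}]+[\{a,b_1\},b_2]\in[\ac,\ac].\]
Together with the anti-symmetry modulo commutators from claim~(2), this shows $\{-,-\}$ descends to a well-defined bilinear operation $\ac_\flat\otimes\ac_\flat\to\ac_\flat$; anti-symmetry and the ordinary Jacobi identity on $\ac_\flat$ then follow directly from claims~(2) and~(3) by projecting to the quotient.
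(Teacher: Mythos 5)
The paper does not prove this proposition at all: it is quoted verbatim from Van den Bergh's article, so there is no in-paper argument to compare against. Your reconstruction is correct and follows the standard route of \cite{vdb}: applying $m$ to the Leibniz rule and to the anti-symmetry relation gives the first two claims; applying $m_3$ to the double Jacobi identity for the triples $(a,b,c)$ and $(b,a,c)$ and recombining via the outer second-argument and inner first-argument Leibniz rules yields the left Loday identity; and the observation that $\{a,-\}$ maps $[\ac,\ac]$ into $[\ac,\ac]$, combined with anti-symmetry modulo commutators, gives the descent to $\ac_\flat$. I checked the signs in your identities $(*)$ and $(**)$ against the permutations $\tau_{(123)}$, $\tau_{(132)}$ and the $\circ$-anti-symmetry, and they are consistent, as is the identity $\{xy,c\}=\dbl y,c\dbr'\,x\,\dbl y,c\dbr''+\dbl x,c\dbr'\,y\,\dbl x,c\dbr''$ used in the last collection step.
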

In later research (see \cite{sart}) this property (i.e. the condition that the formula \eqref{eq:poisstr1} induces a Lie algebra structure on $\ac_\flat$) was used as the basic property in the definition of \textit{modified double Poisson structure} on \ac, see definition \ref{df:modifieddp}.



\subsection{The Poisson structure on $\mathrm{Rep}_n(\ac)$}
Recall that the $n$-dimensional $\mathbb K$-linear \textit{representation space} of an (finitely-pre\-sen\-ted) associative algebra \ac, $\mathrm{Rep}_n(\ac)$ is the set of all $\mathbb K$-linear homomorphisms $\ac\to\mathrm{Mat}_n(\mathbb K)$. Since every such homomorphism is determined by its values on generators of \ac, this space can be given the structure of affine variety inside $\mathbb K^{n^2N}$, where $N$ is the number of generators of \ac: the relations then correspond to the generating equations.

In order to induce a Poisson structure in the representation space $\mathrm{Rep}_n(\ac)$ of $\ac$ for a given $n\in\mathbb{N}$, it is convenient to consider \textit{the coordinate ring} or \textit{the ring of regular functions} on (the affine variety) $\mathrm{Rep}_n(\ac)$, denoted $\mathcal{O}(\mathrm{Rep}_n(\ac))$. Then every element $a\in\ac$ determines a collection of functions $a_{ij}\in\mathcal{O}(\mathrm{Rep}_n(\ac)),\,i,j=1,\dots,n$. Namely, for every $a\in \ac$ we define the element $a_{ij}\in \mathcal O(\mathrm{Rep}_n(\ac))$ as
\[
a_{ij}(x)=(x(a))_{ij},~ x\in \mathrm{Rep}_n(\ac),~i,j=1,\dots,n.
\]

Suppose now that \ac\ is equipped with a double Poisson bracket. As Van den Bergh explains in the cited paper, this allows one to induce a Poisson structure in the ring $\mathcal{O}(\mathrm{Rep}_n(\ac))$. Indeed, consider the formula:
\[\{a_{ij},b_{pq}\}=\dbl  a,b \dbr_{pj}'\cdot \dbl  a,b \dbr_{iq}'',~a,b\in A,~i,j,p,q=1,\dots,n.\]

\subsection{Some relevant results}
In usual ``commutative'' geometry Poisson brackets are often described by bivector fields. Looking for an analogous structure in noncommutative algebras one comes up with the concept of (double) polyvector fields on $\ac$.
Recall that derivation of an algebra \ac\ with values in a bimodule \mc\ is a linear map $X:\ac\to\mc$ satisfying the Leibniz equation:
\[
X(ab)=X(a)b+aX(b).
\]
Then we have the following definition:
\begin{definition}
Let \ac\ be an associative algebra. Double vector fields on $\ac$ are given by the set of derivations $D_\ac:=Der(\ac,\ac\otimes \ac)$, where we use ``external'' bimodule structure on $\ac\otimes\ac$ in the definition of derivations.
\end{definition}
We can now define (double) polyvector fields on an associative algebra \ac\ (remark that the existence of the second \ac-bimodule structure on $\ac\otimes\ac$ is crucial for this construction):
\begin{definition}
The algebra of (double) polyvector fields $D^\bullet_\ac$ on $\ac$ is the tensor algebra $T_\ac D_\ac$ of $D_\ac$ over \ac\ where we give the structure of an $\ac$-(bi)module to $D_\ac$ using the inner (bi)module structure on $\ac\otimes \ac$.
\end{definition}
Recall that a derivation $\xi:\ac\to\mc$ (for an \ac-bimodule \mc) is called ``inner'' if it is given by 
\[
\xi(a)=am-ma,\ a\in\ac
\]
for some $m\in\mc$. Similarly, we will call an element $\xi\in D^\bullet_\ac$ ``inner'', if it is given by a tensor product of inner derivatives.

In the paper cited, Van den Bergh shows that every ``polyvector field'' in $D^n_\ac$ gives rise to a multi-linear ``double''-bracket on \ac, i.e. a polylinear map
\[
\dbl -,\dots,-\dbr:\ac^{\otimes n}\to\ac\otimes\ac.
\]
Since we are only interested in double Poisson brackets, i.e. in case $n=2$, let us give definition in this case only:
\begin{proposition}{\cite{vdb}}\label{prop4.1.1.}
    There is a well defined linear map
    \[\mu:D_\ac^2\to \{\mbox{bilinear double brackets on }\ac\}:Q\mapsto \dbl -,-\dbr_Q\]
    which on $Q=\delta_1\otimes\delta_2$ is given by
    \[\dbl-,-\dbr_Q=\sum_{i=0}^{1}(-1)^{i}\tau_{(12)}^i\circ\dbl-,-\dbr^\sim_Q\circ\tau_{(12)}^{-i} \]
    where
    \[\dbl a_1,a_2 \dbr^\sim_{Q}=\delta_2(a_2)'\delta_1(a_1)''\otimes\delta_1(a_1)'\delta_2(a_2)''\]
    and $\tau_{(12)}$ is the permutation of the tensor legs.
\end{proposition}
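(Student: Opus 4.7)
The strategy is to verify the two axioms of a double bracket (antisymmetry and the outer Leibniz rule in the second argument) directly on pure tensors $Q=\delta_1\otimes\delta_2$, then check that the construction is linear in $Q$ and descends from $D_\ac\otimes_{\mathbb K} D_\ac$ to $D_\ac\otimes_\ac D_\ac$. Unwinding the sum, the formula reads
\[
\dbl a,b\dbr_Q \;=\; \dbl a,b\dbr^\sim_Q \;-\; \tau_{(12)}\dbl b,a\dbr^\sim_Q,
\]
from which antisymmetry $\dbl a,b\dbr_Q=-\dbl b,a\dbr_Q^\circ$ is automatic: since $(-)^\circ=\tau_{(12)}$ and $\tau_{(12)}^2=\mathrm{id}$, swapping $a$ and $b$ exchanges the two summands with a sign flip, which is precisely the point of the alternating sum over $i\in\{0,1\}$.

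The main content is the outer Leibniz rule. I would first establish the non-symmetrized identity
\[
\dbl a,bc\dbr^\sim_Q \;=\; \dbl a,b\dbr^\sim_Q(1\otimes c) \;+\; (b\otimes 1)\dbl a,c\dbr^\sim_Q
\]
by expanding $\delta_2(bc)=\delta_2(b)\,c+b\,\delta_2(c)$ (derivation with respect to the outer bimodule structure) inside the tilde formula and tracking the Sweedler primes. A parallel computation, expanding $\delta_1(bc)$ in $\dbl bc,a\dbr^\sim_Q$ and then applying $\tau_{(12)}$, gives
\[
\tau_{(12)}\dbl bc,a\dbr^\sim_Q \;=\; \tau_{(12)}\dbl b,a\dbr^\sim_Q(1\otimes c) \;+\; (b\otimes 1)\,\tau_{(12)}\dbl c,a\dbr^\sim_Q.
\]
The essential point is that the twist $\tau_{(12)}$ converts the derivation-in-the-first-slot behaviour of $\delta_1$ into derivation-in-the-second-slot of $\dbl-,-\dbr_Q$ with the correct outer-bimodule actions on the output. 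Subtracting the two displays gives the required Leibniz rule for $\dbl-,-\dbr_Q$.

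Bilinearity of $\dbl-,-\dbr^\sim_Q$ in $(\delta_1,\delta_2)$ is visible from its defining formula, so $\mu$ extends to a well-defined linear map on $D_\ac\otimes_{\mathbb K} D_\ac$. The last step is to check it descends to $D_\ac\otimes_\ac D_\ac$: for any $c\in\ac$ one must verify that $\delta_1 c\otimes \delta_2$ and $\delta_1\otimes c\delta_2$ produce the same double bracket, where the $\ac$-actions on $D_\ac$ come from the \emph{inner} bimodule structure on $\ac\otimes \ac$. This reduces to the observation that the inner action moves $c$ from one tensor slot to the other in exactly the way needed for the two sides to match when plugged into the tilde formula.

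The main obstacle I anticipate is bookkeeping rather than any single clever step: one must keep straight the two bimodule structures (outer, used to define $D_\ac$, and inner, used for the tensor product forming $D^2_\ac$), track where the twist $\tau_{(12)}$ sends the Sweedler primes, and ensure signs, slots and outer-module actions line up in all four identities above. The conceptual core is the role of $\tau_{(12)}$ as translator between ``derivation in the first slot'' and ``derivation in the second slot''; the bulk of the proof is the careful verification that this translation is compatible with the outer-bimodule actions appearing in the Leibniz rule.
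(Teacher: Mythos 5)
Your verification is correct: the alternating sum over $i\in\{0,1\}$ does make antisymmetry automatic, the two Leibniz identities for $\dbl-,-\dbr^\sim_Q$ and $\tau_{(12)}\dbl-,-\dbr^\sim_Q$ hold exactly as you state (by expanding $\delta_2(bc)$, resp.\ $\delta_1(bc)$, via the outer Leibniz rule and tracking the primes), and the descent from $D_\ac\otimes_{\mathbb K}D_\ac$ to $D_\ac\otimes_\ac D_\ac$ is precisely the observation that the inner $\ac$-action on $D_\ac$ inserts $c$ in the slot where the tilde formula multiplies the two derivations together. Note that the paper itself gives no proof of this proposition --- it is quoted verbatim from Van den Bergh \cite{vdb} --- so there is nothing to compare against within the paper; your argument coincides with the standard one in the cited source.
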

In the cited paper, Van den Bergh introduced the analog of the Schouten-Nijenhuis bracket on $D_\ac$ and showed that \textit{every double Poisson bracket is induced by a double bivector field, whose square with respect to the Schouten bracket vanishes modulo commutators}.

The importance of the ``double polyvectors'' is in great measure due to the role they play in the theory of double brackets. Namely, assume that $\ac$ is a finitely generated $\Bbbk$-algebra. We recall that (according to Van den Bergh) an associative algebra \ac\ is called \textit{smooth} (see \cite{vdb}) if the \ac-bimodule $\Omega(\ac)=\ker(\ac\otimes\ac\to\ac)$ is projective (as bimodule). Then 
\begin{proposition}{\cite{vdb}}\label{prop4.1.2.}
    If $\ac$ is smooth then $\mu$ is an isomorphism.
\end{proposition}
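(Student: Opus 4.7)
The strategy is to invert $\mu$ by exploiting the universal property of K\"ahler differentials, which is particularly clean when $\ac$ is smooth. First I would reinterpret what a double bracket really is: since $\dbl -,-\dbr$ is a derivation in its second argument with respect to the outer bimodule structure, for each fixed $a\in\ac$ the map $\dbl a,-\dbr$ is an element of $D_\ac$, and the antisymmetry condition combined with the derivation property in the second slot forces the assignment $a\mapsto \dbl a,-\dbr$ itself to be a derivation from $\ac$ to $D_\ac$, where $D_\ac$ carries the \emph{inner} bimodule structure inherited from the inner bimodule structure on $\ac\otimes\ac$. Hence the space of double brackets on $\ac$ is naturally identified with $\mathrm{Der}(\ac,D_\ac)$.

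Next I would use the universal property of $\Omega_{\ac}$. By smoothness, $\Omega_{\ac}=\ker(\ac\otimes\ac\to\ac)$ is a projective $\ac$-bimodule, and the universal derivation $d:\ac\to\Omega_{\ac}$ yields, for any bimodule $M$, an isomorphism $\mathrm{Der}(\ac,M)\cong \mathrm{Hom}_{\ac^e}(\Omega_{\ac},M)$. Taking $M=\ac\otimes\ac$ (outer structure) gives $D_\ac\cong \mathrm{Hom}_{\ac^e}(\Omega_{\ac},\ac\otimes\ac)$, while applying the same universal property to $M=D_\ac$ (inner structure) yields
\[
\mathrm{Der}(\ac,D_\ac)\;\cong\;\mathrm{Hom}_{\ac^e}\bigl(\Omega_{\ac},\,\mathrm{Hom}_{\ac^e}(\Omega_{\ac},\ac\otimes\ac)\bigr).
\]
In the finitely presented smooth setting $\Omega_{\ac}$ is moreover finitely generated projective, so tensor--Hom adjunction together with reflexivity of $\Omega_{\ac}$ collapses the iterated Hom into a tensor product, identifying it with $D_\ac\otimes_\ac D_\ac=D^2_\ac$, where the tensor product is formed using the inner bimodule structure on $D_\ac$, exactly as in the definition of $D_\ac^\bullet$. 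This produces the candidate inverse of $\mu$.

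The principal obstacle is bookkeeping: each of $\ac\otimes\ac$ and $D_\ac$ carries two distinct bimodule structures, and different ones are used at different points in the chain above; one has to check that every isomorphism in the chain is bimodule linear for the correct pair of structures, and that the evaluation and adjunction maps interact correctly with the antisymmetry condition on double brackets (ensuring, in particular, that the image really lands in $D_\ac\otimes_\ac D_\ac$ rather than in some non-symmetrized variant). Once the chain of abstract isomorphisms is in place, I would verify that the composite inverse reproduces $\mu^{-1}$ by unwinding the identifications on a decomposable element $Q=\delta_1\otimes\delta_2$ and matching the result against the explicit formula of Proposition \ref{prop4.1.1.}; this final check reduces to a direct manipulation on pure tensors that recovers the signed sum over cyclic permutations appearing in the definition of $\dbl -,-\dbr_Q$.
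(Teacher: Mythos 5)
The paper does not actually prove this statement: Proposition \ref{prop4.1.2.} is quoted verbatim from Van den Bergh's paper \cite{vdb}, so there is no in-paper proof to compare against. Judged on its own terms, your outline follows the same route Van den Bergh takes: identify derivations valued in a bimodule $M$ with $\mathrm{Hom}_{\ac^e}(\Omega(\ac),M)$ via the universal derivation, use smoothness (which in this paper's definition includes finite generation, hence $\Omega(\ac)$ is finitely generated projective as a bimodule) to commute $\mathrm{Hom}$ past tensor products, and collapse the iterated $\mathrm{Hom}$ into $D_\ac\otimes_\ac D_\ac$ with the inner bimodule structure. That is the correct skeleton, and your observation that antisymmetry plus the outer Leibniz rule in the second slot forces the inner Leibniz rule in the first slot is also correct.

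There is, however, one genuine gap, and it sits exactly at the point you wave at in your last paragraph. The space of double brackets is \emph{not} all of $\mathrm{Der}(\ac,D_\ac)$: being a derivation in each argument (for the appropriate bimodule structures) does not imply the antisymmetry $\dbl a,b\dbr=-\dbl b,a\dbr^{\circ}$, so double brackets form a proper subspace of $\mathrm{Der}(\ac,D_\ac)$. What your chain of identifications actually inverts is the non-antisymmetrized map $\widetilde{\mu}\colon Q\mapsto\dbl-,-\dbr^{\sim}_{Q}$, whose natural codomain is the space of all ``$2$-multibrackets'' (Leibniz in both slots, no symmetry condition). The map $\mu$ of Proposition \ref{prop4.1.1.} differs from $\widetilde{\mu}$ by an explicit antisymmetrization, and a short computation with the displayed formula shows, e.g., that $\mu(\delta\otimes\delta)=0$ while $\delta\otimes\delta\neq 0$ in $D_\ac\otimes_\ac D_\ac$; so your proposed inverse cannot literally invert $\mu$ on all of $D^2_\ac$. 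To close the argument you must either carry the ($\mathbb{Z}/2$-)symmetrization through the whole chain, matching the antisymmetric part of $\mathrm{Der}(\ac,D_\ac)$ with the appropriate (co)invariants of $D_\ac\otimes_\ac D_\ac$ under the swap (which is exactly how Van den Bergh organizes the $n$-bracket statement via the cyclic group action), or settle for the surjectivity of $\mu$, which is all that the rest of this paper actually uses. As written, the final ``unwinding on $Q=\delta_1\otimes\delta_2$'' step would reveal this mismatch rather than resolve it.
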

Thus, in the smooth case, the classification of all brackets amounts to the classification of double polyvectors. Below we will make use of this fact.

\subsection{Inner double brackets}
Using the notation from the previous section, let $Q=\delta_1\otimes\delta_2$ be an inner bivector in $D^2_\ac$, i.e. 
\[
\delta_1(a)=[U\otimes V,a],\,\delta_2(a)=[X\otimes Y,a],
\]
then
\[
\dbl a,b \dbr^\sim_{Q}=[[XV\otimes UY,a]_{in},b]_{out},
\]
where $[,]_{in},\,[,]_{out}$ denote the commutators with respect to inner/outer bimodule structure, i.e.
\[
\dbl a,b \dbr^\sim_{Q}=bXVa\otimes UY-bXV\otimes aUY-XVa\otimes UYb+XV\otimes aUYb.
\]
Eventually we see that $\dbl a,b \dbr_{Q}$ in this case is given by
\[
\dbl a,b \dbr_{Q}=\dbl a,b \dbr^\sim_{Q}-\tau_{12}(\dbl b,a \dbr^\sim_{Q})=[[XV\otimes UY-UY\otimes XV,a]_{in},b]_{out}.
\]
It is clear that if \ac\ is unital (which we will mutely assume from this moment), the elements $XV\otimes UY-UY\otimes XV$ span the whole subspace of skew-symmetric tensors in $\ac\otimes \ac$. From now on we will abbreviate this construction to
\[
\dbl x,y\dbr_r=[[r,x]_{in},y]_{out},\ r\in\Lambda^2\ac.
\]
The condition that double brackets satisfy the Jacobi identity, can be expressed in terms of $r$ as follows: let
\[
J(r)=r_{13}\times r_{12}+r_{23}\times r_{13}-r_{12}\times r_{23},
\]
where $r_{ij}$ denotes the natural inclusion of $r$ into $\ac^{\otimes 3}$ as the $i$-th and the $j$-th tensor legs. Then the double Jacobi identity for $\dbl,\dbr_r$ holds iff
\[
[[[J(r),x]_1,y]_2,z]_3=0\ \mbox{for all}\ x,y,z\in\ac.
\]
Here $[J,x]_k$ for $J\in\ac^{\otimes 3},\,x\in\ac$ denotes the commutator of $J$ and $x$ with respect to one of the three natural bimodule structures in $\ac^{\otimes 3}$:
\[ 
\begin{aligned}
{}[a\otimes b \otimes c,x]_1&=a\otimes xb\otimes c - ax\otimes b \otimes c,\\
[a\otimes b \otimes c,y]_2&=a\otimes b\otimes yc - a\otimes b y\otimes c,\\
[a\otimes b \otimes c,z]_3&=za\otimes b\otimes c - a\otimes b \otimes c z.
\end{aligned}
\]
One can prove this by a straightforward (albeit quite tedious) computation. In particular, \textit{the double Jacobi identity for an inner double bracket $\dbl,\dbr_r$ holds if the element $r\in\Lambda^2\ac$ satisfies the \textbf{classical associative Yang-Baxter equation}}
\begin{equation}\label{eq:ybea}
r_{13}\times r_{12}+r_{23}\times r_{13}-r_{12}\times r_{23}=0.
\end{equation}
We will call such double Poisson structures \textit{inner}. In other words, inner double Poisson structures are the structures, given by commutators with the elements $r\in\ac\wedge\ac$. 

Finally, assume that the algebra \ac\ is smooth (see proposition \ref{prop4.1.2.} and discussion before it), in this case all double brackets (not necessarilly satisfying the Jacobi identity) are induced by ``double bivectors''. In particular, if all ``double bivectors'' are inner, then so are all double (Poisson) brackets.

\begin{remark}
Let us make one more observation about the properties of inner double brackets: \textit{if a double Poisson bracket is inner, then functions $tr x,\,x\in\ac$ in the corresponding Poisson structure on representation space are central}. Here $tr x=\sum_i x_{ii},\,x\in\ac$ in the notation we introduced above. This can be shown by a direct computation: if $r=A\wedge B\in\Lambda^2\ac$, then
\[
\begin{split}
\sum_i\{x_{ii},y_{pq}\}&=\dbl  x,y \dbr_{pi}'\cdot \dbl  x,y \dbr_{iq}''\\
                        &=(Ax)_{pi}(By)_{iq} - (yAx)_{pi}B_{iq} - A_{pi}(xBy)_{iq} + (yA)_{pi}(xB)_{iq}\\
                        &\quad -(Bx)_{pi}(Ay)_{iq} + (yBx)_{pi}A_{iq} + B_{pi}(xAy)_{iq} - (yB)_{pi}(xA)_{iq}\\
                        &=(AxBy)_{pq}-(yAxB)_{pq}-(AxBy)_{pq}+(yAxB)_{pq}\\
                        &\quad-(BxAy)_{pq}+(yBxA)_{pq}+(BxAy)_{pq}-(yBxA)_{pq}=0.
\end{split}
\]
\end{remark}

\subsection{Modified Double Poisson brackets}
There exist few generalizations of double Poisson brackets in the literature. All these structures are motivated by the desire to have an algebraic structure on \ac, which can induce Poisson structures on representation spaces (usual or reduced) of the algebra. Here we give definitions of two such structures.

First, Crawley-Boevey introduced the concept of $H_0$-Poisson structure (see \cite{CBW}), as a structure on the universal trace space $\ac_\flat=\ac/[\ac,\ac]$, verifying certain conditions. This is an algebraic structure, that also induces a Poisson bracket on the reduced representation space $\widetilde{\mathrm{Rep}}_n(\ac)= \mathrm{Rep}_n(\ac)/Ad_{GL_n(\mathbb K)}$. The term $H_0$-structure is used because the universal trace space, the quotient $\ac_\flat=\ac/[\ac,\ac]$, coincides with the $0$-th Hochschild homology group of \ac, $H_0(\ac)$.

\begin{definition}
    A linear map $\{ -,- \}_{H_0}\colon \ac_\flat\otimes \ac_\flat\to \ac_\flat$ is called a \textit{$H_0-$Poisson structure on $\ac$} if
    \begin{itemize}
    \item The bracket $\{ -,- \}_{H_0}$ is a Lie bracket on $\ac_\flat$
    \item For every $a\in \ac_\flat$ the linear map $\{a,- \}_{H_0}\colon \ac_\flat\to \ac_\flat$ is induced by a derivation on $\ac$.
    \end{itemize}
\end{definition}
This construction was later further modified by Arthamonov, see \cite{sart}, who coined the notion of \textit{modified} double Poisson structure; roughly speaking it is a version of Van den Bergh's double Poisson structure, where certain conditions are relaxed:
\begin{definition}\label{df:modifieddp}
    A linear map $\dbl -,-\dbr\colon \ac \otimes \ac \to \ac \otimes \ac$ is called a \textit{modified double Poisson bracket} if it satisfies the following properties for any $a,b,c\in \ac$,
    \begin{itemize}
        \item Leibniz rules, \[\dbl a,bc \dbr=(b\otimes 1)\dbl a,c \dbr + \dbl a,b\dbr(1\otimes c),\]
        \[\dbl ab,c \dbr = (1\otimes a) \dbl b,c \dbr + \dbl a,c \dbr (b\otimes 1) .\]
        \item $H_0-$skewsymmetry, \[ \{a,b\}+\{b,a\}\in [\ac,\ac], \]
        where $\{-,-\}=m\circ \dbl -,-\dbr$, and $m$ is just the product of the elements in $\ac$; in other words the operation $\{,\}$ is skew-symmetric as a map to $\ac_\flat$.
        \item Jacobi identity, \[ \{a,\{b,c\}\}-\{ b,\{a,c\}\}-\{\{a,b\},c\} =0. \] 
    \end{itemize}
\end{definition}
These two structures are closely related. In fact, one can show (see \cite{sart}) that for any modified double Poisson structure $\dbl,\dbr$ on \ac\ the formula
\[
\{x,y\}_{H_0}=(m\circ\dbl\bar x,\bar y\dbr)_\flat
\]
where $\bar x,\bar y$ are some representatives in \ac\ of the elements $x,y\in\ac_\flat$ and for any $a\in\ac$ we denote by $a_\flat$ its natural projection to the universal trace space, determines an $H_0$-Poisson structure on \ac. Informally one can say that Arthamonov's modified double Poisson bracket is given by choosing representatives of an $H_0$-bracket of Crawley-Boevey.

The importance of these two structures is justified by the following
\begin{theorem}[\cite{sart}]
Any $H_0-$Poisson structure on $\ac$ induces a unique Poisson bracket on the quotient $\widetilde{\mathrm{Rep}}_n(\ac)$. In particular, any modified double Poisson bracket on $\ac$ induces a unique Poisson bracket on the quotient $\widetilde{\mathrm{Rep}}_n(\ac)$.
\end{theorem}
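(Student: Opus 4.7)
The plan is to exploit the classical description of the coordinate ring of $\widetilde{\mathrm{Rep}}_n(\ac)$ furnished by Procesi's invariant theory: the invariant subring $\mathcal{O}(\widetilde{\mathrm{Rep}}_n(\ac))=\mathcal{O}(\mathrm{Rep}_n(\ac))^{GL_n}$ is generated, as a commutative algebra, by the trace functions $\tau_a:\phi\mapsto\mathrm{tr}(\phi(a))$ for $a\in\ac$, and each $\tau_a$ depends only on the class of $a$ in $\ac_\flat$. Since a Poisson bracket on a commutative algebra is determined once specified on a generating set (subject to skew-symmetry, Leibniz, and Jacobi), it is enough to produce a consistent bracket on $\{\tau_a\}_{a\in\ac}$ and extend by Leibniz.

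For each $a\in\ac_\flat$, the hypothesis provides a derivation $\delta_a:\ac\to\ac$ whose induced map on $\ac_\flat$ is $\{a,-\}_{H_0}$; any such $\delta_a$ lifts to a derivation $\hat{\delta}_a$ of $\mathcal{O}(\mathrm{Rep}_n(\ac))$ by the formula $\hat{\delta}_a(b_{ij})=(\delta_a(b))_{ij}$, mirroring the construction already used in the excerpt. This lift is manifestly $GL_n$-equivariant (the $GL_n$-action transforms $b_{ij}$ and $(\delta_a(b))_{ij}$ identically), hence restricts to a derivation of the invariant subring. I would then set $\{\tau_a,f\}:=\hat{\delta}_a(f)$ for $f\in\mathcal{O}(\widetilde{\mathrm{Rep}}_n(\ac))$ and extend to all pairs by imposing Leibniz in the first argument.

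Three verifications are then required. \textit{Well-definedness:} two lifts $\delta_a,\delta_a'$ of the same endomorphism of $\ac_\flat$ differ by a derivation $\eta$ with image in $[\ac,\ac]$; on each trace generator, $\hat{\eta}(\tau_b)=\tau_{\eta(b)}=0$ because the trace of a commutator vanishes, and then by Leibniz together with Procesi's generation theorem, $\hat{\eta}$ annihilates every invariant. \textit{Skew-symmetry:} on generators, $\{\tau_a,\tau_b\}=\tau_{\delta_a(b)}=\tau_{\{a,b\}_{H_0}}$, which is skew in $a,b$ by skew-symmetry of the $H_0$-bracket on $\ac_\flat$. \textit{Jacobi:} on triples of generators it reduces to $\tau_{\{a,\{b,c\}_{H_0}\}_{H_0}}=\tau_{\{\{a,b\}_{H_0},c\}_{H_0}}+\tau_{\{b,\{a,c\}_{H_0}\}_{H_0}}$, which is the Jacobi identity for $\{-,-\}_{H_0}$; it then propagates from generators to the whole invariant ring by the standard argument that a Leibniz derivation whose Jacobiator vanishes on a generating set vanishes everywhere. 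Uniqueness is automatic, since the bracket is determined on a generating set. The second statement of the theorem follows immediately: any modified double Poisson bracket on $\ac$ produces an $H_0$-Poisson structure via $\{x,y\}_{H_0}:=(m\circ\dbl\bar x,\bar y\dbr)_\flat$, as noted earlier in the excerpt, whereupon the first part applies.

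The main obstacle, I expect, is the well-definedness step: one must rigorously argue that a derivation $\eta:\ac\to\ac$ with $\eta(\ac)\subseteq[\ac,\ac]$ kills the entire $GL_n$-invariant subring, and not merely the trace generators. This is where Procesi's theorem does the essential work, and care is needed to confirm that it applies in the generality at hand (an arbitrary finitely generated $\ac$ presented as a quotient of a free algebra, so that $\mathrm{Rep}_n(\ac)$ embeds as a closed $GL_n$-subvariety of a matrix tuple space where the invariants are trace-generated). A secondary subtlety is verifying that $[\hat{\delta}_a,\hat{\delta}_b]$ and $\hat{\delta}_{\{a,b\}_{H_0}}$ agree on the invariant ring, which again reduces to the fact that their difference is induced by a derivation of $\ac$ landing in $[\ac,\ac]$ and hence annihilates invariants.
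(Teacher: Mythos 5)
The paper does not actually prove this statement: it is quoted from Arthamonov's work \cite{sart} (and ultimately rests on Crawley-Boevey's theorem in \cite{CBW}), so there is no in-paper proof to compare against. Your argument is, in substance, the standard proof of that cited result: use Procesi/Le Bruyn--Procesi to see that $\mathcal O(\mathrm{Rep}_n(\ac))^{GL_n}$ is generated by the traces $\tau_a$ (surjectivity of restriction from the free algebra's representation space uses reductivity of $GL_n$ in characteristic zero, which is the setting here), lift the derivations $\delta_a$ supplied by the $H_0$-axiom to $GL_n$-equivariant derivations of the coordinate ring, observe that a derivation of $\ac$ landing in $[\ac,\ac]$ kills all trace generators and hence the whole invariant ring, and read off skew-symmetry, Jacobi and uniqueness from the generators. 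All of this is correct and is exactly the intended route.

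One step deserves more care than you give it: the extension ``by imposing Leibniz in the first argument.'' Knowing $\{\tau_a,-\}=\hat\delta_a$ for each $a$ does not immediately define $\{F,G\}$ for an arbitrary invariant $F$, because the $\tau_a$ satisfy relations in the (typically non-free) invariant ring, and you must check that if $P(\tau_{a_1},\dots,\tau_{a_k})=0$ then $\sum_i \partial_i P\cdot\hat\delta_{a_i}(G)=0$. This is not covered by your ``well-definedness'' item, which only treats the ambiguity in the choice of lift $\delta_a$. The standard repair is the skew-symmetry trick: for $G=\tau_b$ one has $\sum_i \partial_i P\cdot\hat\delta_{a_i}(\tau_b)=-\sum_i \partial_i P\cdot\hat\delta_b(\tau_{a_i})=-\hat\delta_b\bigl(P(\tau_{a_1},\dots,\tau_{a_k})\bigr)=0$, and since both sides are derivations in $G$ the identity propagates from trace generators to all invariants. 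With that point added your proof is complete and agrees with the argument in the cited sources.
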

Below we will give an example of modified double brackets in this sense of Arthamonov.

\section{Double Poisson brackets on matrix algebras}
The main purpose of this paper is to give examples of double Poisson structures on finite-dimensional algebras. We begin with matrix algebras.

\subsection{Equations for a double Poisson bracket}
In general, double Poisson brackets on a finite-dimensional algebra \ac\ can be described as solutions to a system of linear and quadratic equations on their coefficients with respect to a fixed basis in \ac. For instance, in the case of matrix algebras $\mathrm{Mat}_n(\mathbb{K}),\, n\in\mathbb N$ we 
can consider the basis of matrix units $E_{ij},\,i,j=1,\dots,n$; recall that $E_{ij}$ is the matrix with entries 
\[
(E_{ij})_{\alpha\beta}=\delta_{i\alpha}\delta_{j\beta}.
\]
Then any double Poisson bracket can be described in terms of this basis in terms of the corresponding coefficients:
\[
\dbl E_{ij},E_{kl}\dbr=\Phi_{ijkl}^{\alpha\beta\gamma\delta}E_{\alpha\beta}\otimes E_{\gamma\delta}.
\]
Now the problem is to determine such coefficients $\Phi_{ijkl}^{\alpha\beta\gamma\delta}$ that the skew-symmetricity, Leibniz rules and the double Jacobi identity would hold.

First, consider the anti-symmetricity condition. It amounts to the following equations:
\[
\begin{split}
\Phi_{ijkl}^{abcd}&E_{ab}\otimes E_{cd}=\dbl E_{ij},E_{kl}\dbr=
- \dbl E_{kl},E_{ij}\dbr^\circ= -(\Phi_{klij}^{\alpha\beta\gamma\delta}E_{\alpha\beta}\otimes E_{\gamma\delta})^\circ\\
 & =-\Phi_{klij}^{\alpha\beta\gamma\delta} E_{\gamma\delta}\otimes E_{\alpha\beta}=-\Phi_{klij}^{cdab} E_{ab}\otimes E_{cd}
\end{split}
\]
Thus the skew-symmetricity of the bracket is equivalent to the equation
\[
\Phi_{ijkl}^{abcd}=-\Phi_{klij}^{cdab}
\]
for all $i,j,k,l,a,b,c$ and $d$. To obtain the Leibniz rule with respect to the second argument consider the structure constants of the matrix algebra
\[
E_{ij}E_{kl}=\delta_{jk}E_{il}=\sum_{p,q=1}^n\delta_{jk}\delta_{pi}\delta_{ql} E_{pq}=C_{ijkl}^{pq}E_{pq}.
\]
Now expanding on both sides the equation
\[
\dbl E_{ab},C_{ijkl}^{pq}E_{pq}\dbr =(E_{ij}\otimes Id_n) \dbl E_{ab},E_{kl}\dbr + \dbl E_{ab},E_{ij}\dbr (Id_n\otimes E_{kl}),
\]
we get
\[
\delta_{\alpha i}\Phi_{abkl}^{j\beta\gamma\delta}+\delta_{\delta l}\Phi_{abij}^{\alpha\beta\gamma k} - \delta_{jk} \Phi_{abil}^{\alpha\beta\gamma\delta}=0.
\]
The double Jacobi identity has the form
\[
\begin{split} 
\dbl E_{ij},\dbl E_{kl} ,E_{mn} \dbr\dbr_L &+ \tau_{(123)} \dbl E_{mn},\dbl E_{ij} ,E_{kl} \dbr\dbr_L \\
                                                                &\qquad\qquad+\tau_{(132)} \dbl E_{kl},\dbl E_{mn} ,E_{ij} \dbr \dbr_L =0,
\end{split}
\]
and after some computations we see that it is satisfied if
\[
\Phi_{klmn}^{\alpha\beta\gamma\delta}\Phi_{ij\alpha\beta}^{wxyz} + \Phi_{mnij}^{\alpha\beta w x} \Phi_{kl\alpha \beta}^{yz\gamma\delta} + \Phi_{ijkl}^{\alpha\beta yz} \Phi_{mn\alpha\beta}^{\gamma\delta wx} =0.
\]
Thus in total, we have an overdetermined system of linear and quadratic equations on $n^8$ coefficients. Even in case $n=2$ that would give 256 variables and more than 1000 equations.

\subsection{Smoothness of the algebra of matrices}
As one can see, the system of equations that determine double Poisson structures is quite involved. So, to avoid the brute-force computations, we try to reduce the difficulty by proving that all double brackets in this situation are inner, thus the double Poisson brackets are classified by the solutions of classical associative Yang-Baxter equation \eqref{eq:ybea}. Recall that according to Van den Bergh all the double multibrackets on a smooth algebra are given by ``double polyvector fields'' see proposition \ref{prop4.1.2.}. So, we begin by showing that the matrix algebras are smooth.

 It is obvious that all matrix algebras are finitely generated, thus it remains to prove that the bimodule $\Omega(\mathrm{Mat}_n(\mathbb{K}))$ is projective over $\mathrm{Mat}_n(\mathbb{K})$. 
In other words, we must show that it is a direct summand of a free $\mathrm{Mat}_n(\mathbb{K})$-bimodule.

   
So we prove
\begin{proposition}
    Let $n\in\mathbb{N},\ \ac=\mathrm{Mat}_n(\mathbb{K})$. Then $\Omega(\ac)=\ker(m:\ac\otimes\ac\to\ac)$ is a projective $\ac\otimes \ac^\circ$-module (here $\ac^\circ$ denotes the opposite algebra).  
\end{proposition}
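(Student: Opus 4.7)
The plan is to show that $\ac=\mathrm{Mat}_n(\mathbb{K})$ is \emph{separable} over $\mathbb{K}$, which is strictly stronger than smoothness in the sense of Van den Bergh and immediately yields the desired projectivity. Concretely, I would exhibit a \emph{separability idempotent} $e\in\ac\otimes\ac$, meaning an element satisfying $m(e)=1$ and $a\cdot e = e\cdot a$ for every $a\in\ac$, where the bimodule action on $\ac\otimes\ac$ is the outer one, $a(x\otimes y)b=ax\otimes yb$. Such an $e$ produces a bimodule section $s\colon\ac\to\ac\otimes\ac$ of $m$ by $s(a)=ae=ea$: the centrality condition is exactly what makes $s$ bilinear over $\ac$ from both sides, while $m(e)=1$ ensures $m\circ s = \mathrm{id}_\ac$.

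The natural candidate, motivated by the factorisation $E_{ij}=E_{i1}E_{1j}$ of matrix units, is
\[
e \;=\; \sum_{j=1}^{n} E_{j1}\otimes E_{1j}.
\]
One immediately verifies $m(e)=\sum_j E_{j1}E_{1j}=\sum_j E_{jj}=\mathrm{Id}_n$. Checking $a\cdot e = e\cdot a$ on a matrix unit $a=E_{kl}$ is a two-line computation using $E_{pq}E_{rs}=\delta_{qr}E_{ps}$: both sides collapse to $E_{k1}\otimes E_{1l}$, and linearity extends the identity to all of $\ac$. (Any other fixed column index in place of $1$ would serve equally well; this reflects the Morita-triviality of $\ac$.)

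With the section $s$ in hand, the short exact sequence
\[
0 \;\to\; \Omega(\ac) \;\to\; \ac\otimes\ac \;\xrightarrow{\,m\,}\; \ac \;\to\; 0
\]
of $\ac\otimes\ac^\circ$-modules splits. Hence $\Omega(\ac)$ is a direct summand of $\ac\otimes\ac$, which is the free $\ac\otimes\ac^\circ$-module of rank one, and is therefore projective. I do not expect any serious obstacle: the whole argument reduces to the elementary matrix-unit computation above and the standard fact that a section of the multiplication map forces $\Omega(\ac)$ to split off as a bimodule. If one preferred a less hands-on route, one could invoke the Morita equivalence of $\mathrm{Mat}_n(\mathbb{K})$ with $\mathbb{K}$ together with the Morita-invariance of smoothness and reduce to the trivial case $\ac=\mathbb{K}$; however, the explicit separability idempotent is more direct and generalises at once to any finite product of matrix algebras, a point that will be useful later in the paper.
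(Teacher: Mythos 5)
Your proof is correct, and it reaches the same structural conclusion as the paper --- namely that the multiplication sequence $0\to\Omega(\ac)\to\ac\otimes\ac\to\ac\to 0$ splits as a sequence of bimodules, so that $\Omega(\ac)$ is a direct summand of the rank-one free $\ac\otimes\ac^\circ$-module $\ac\otimes\ac$ --- but it justifies the splitting differently. The paper identifies $\ac\otimes\ac^\circ$ with $\mathrm{Mat}_{n^2}(\mathbb{K})$ via $x\otimes y\mapsto x\otimes y^{\top}$, recognizes $\ac$ as the standard representation $\mathbb{K}^{n^2}$ of that matrix algebra, and concludes that $\ac$ is a projective bimodule because the standard representation is a direct summand of the regular one; projectivity of the quotient then splits the sequence. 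You instead produce the splitting by hand, exhibiting the separability idempotent $e=\sum_j E_{j1}\otimes E_{1j}$ and the induced bimodule section $s(a)=ae=ea$ of $m$. Your computations ($m(e)=\mathrm{Id}_n$ and $E_{kl}\cdot e=e\cdot E_{kl}=E_{k1}\otimes E_{1l}$) check out. The explicit idempotent buys you a concrete section and, as you note, transfers verbatim to finite products of matrix algebras, which is relevant to the corollary on semisimple algebras; the paper's route is shorter on computation and fits its subsequent use of Morita invariance of Hochschild cohomology, since the same identification of the enveloping algebra with $\mathrm{Mat}_{n^2}(\mathbb{K})$ reappears there. Both arguments are, at bottom, proofs that $\mathrm{Mat}_n(\mathbb{K})$ is a separable $\mathbb{K}$-algebra, which is indeed stronger than Van den Bergh smoothness.
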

\begin{proof}
We can identify the algebra $\ac\otimes \ac^\circ$ with $\mathrm{Mat}_{n^2}(\mathbb{K})\cong \mathrm{Mat}_{n}(\mathbb{K})^{\otimes^2}$; the identification is $x\otimes y\mapsto x\otimes y^\top$. In this case the algebra $\ac$ as $\ac\otimes\ac^\circ$-module is identified with the (left) $\mathrm{Mat}_{n^2}(\mathbb{K})$-module $\mathbb K^{n^2}$, the standard representation of the matrix algebra. And this module is a direct summand in the free $\mathrm{Mat}_{n^2}(\mathbb{K})$-module $\mathrm{Mat}_{n^2}(\mathbb{K})$.

On the other hand, by definition of $\Omega(\ac)$, we have the exact sequence of bimodules
\[
0\to \Omega(\ac)\to \ac\otimes \ac\stackrel{m}{\to} \ac\to 0
\]
Since $\ac$ is a projective $\mathrm{Mat}_{n^2}(\mathbb{K})$-module, this sequence is split, and it follows that $\Omega(\ac)$ is direct summand of $\ac\otimes \ac$, which is a free $\ac\otimes \ac$-module.
\end{proof}

\subsection{Double Poisson brackets of the algebra of matrices are inner}
Now as a concequence of the previous observation we can use the result of the proposition \ref{prop4.1.2.}: all multibrackets on a matrix algebra are induced by double polyvectors. In order to show that all the polyvectors are inner, we will need one more definition:
\begin{definition}\cite{morita}
 We say that rings $R,S$ are Morita equivalent if the (left) module categories $_RMod,~_SMod$ are equivalent.
\end{definition}
In particular, it is well known that for any ring $R$ is Morita equivalent to the corresponding matrix algebras $\mathrm{Mat}_n(R)$ for every $n\in\mathbb{N}$.
An important property of Morita equivalent algebras is that \textit{their Hochschild (co)homology are equal in all dimensions}, see for instance the book by Loday, \cite{Loday}. Now, using all these observations we have the following result.
\begin{proposition}
For any $n\in\mathbb{N}$ all double Poisson brackets on $\mathrm{Mat}_n(\mathbb{K})$ are inner.
\end{proposition}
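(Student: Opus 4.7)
The plan is to combine the smoothness of $\ac = \mathrm{Mat}_n(\mathbb{K})$ just established with the Morita invariance of Hochschild cohomology recalled above. By Proposition \ref{prop4.1.2.}, since $\ac$ is smooth, every double bracket on $\ac$ is induced by some double bivector in $D^2_\ac = T^2_\ac D_\ac$. As observed in the subsection on inner double brackets, if every such bivector is inner in the sense of being a sum of tensor products of inner derivations, then the induced double bracket has the form $\dbl x,y\dbr_r$ for some $r \in \Lambda^2\ac$ and is therefore inner. Hence the problem reduces to showing that every $\delta \in D_\ac = \mathrm{Der}(\ac,\, \ac\otimes \ac)$, with respect to the outer bimodule structure on $\ac\otimes\ac$, is an inner derivation of the form $\delta(a) = a\cdot(U\otimes V) - (U\otimes V)\cdot a$ for some $U\otimes V\in\ac\otimes\ac$.

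The quotient of all derivations with values in an $\ac$-bimodule $M$ by inner ones is, by definition, the first Hochschild cohomology $HH^1(\ac, M)$. So it is enough to prove
\[
HH^1\bigl(\mathrm{Mat}_n(\mathbb{K}),\; \mathrm{Mat}_n(\mathbb{K})\otimes \mathrm{Mat}_n(\mathbb{K})\bigr) = 0,
\]
where the bimodule structure on the right is the outer one.

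For this I would invoke the Morita invariance principle stated just before the proposition: $\mathrm{Mat}_n(\mathbb{K})$ is Morita equivalent to $\mathbb{K}$, so its Hochschild cohomology (with the corresponding coefficients) coincides with that of $\mathbb{K}$. Since $\mathbb{K}$ is a separable commutative $\mathbb{K}$-algebra, $HH^k(\mathbb{K}, N) = 0$ for every $k \geq 1$ and every $\mathbb{K}$-bimodule $N$, so in particular the group above vanishes. Consequently every derivation in $D_\ac$ is inner; every pure tensor $\delta_1\otimes\delta_2 \in D^2_\ac$ is then a tensor of inner derivations, and by linearity every element of $D^2_\ac$ is a sum of inner bivectors. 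By the explicit construction recalled earlier, the induced double bracket is then of the form $\dbl -,-\dbr_r$ for some $r \in \Lambda^2\ac$, i.e.\ inner. The essential ingredient is the vanishing of $HH^1$, which follows automatically once Morita equivalence and smoothness are in hand; the remaining step, passing from \emph{all derivations inner} to \emph{all bivectors (and hence all brackets) inner}, is formal given the tensor product structure of $D^2_\ac$ over $\ac$ and the formulas already set up.
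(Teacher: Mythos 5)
Your proposal is correct and follows essentially the same route as the paper: reduce via smoothness and Proposition \ref{prop4.1.2.} to showing that every derivation in $D_\ac=\mathrm{Der}(\ac,\ac\otimes\ac)$ is inner, then obtain the vanishing of $HH^1$ from the Morita equivalence of $\mathrm{Mat}_n(\mathbb{K})$ with $\mathbb{K}$. You are in fact slightly more careful than the paper on two points (transporting the coefficient bimodule under the Morita equivalence, and spelling out the formal passage from inner derivations to inner bivectors to inner brackets), but the argument is the same.
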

\begin{proof}
    Let $\ac=\mathrm{Mat}_n(\mathbb{K}),~\mathcal{B}=\mathbb{K}$, consider $D_{\ac}=Der_\mathcal{B}(\ac,\ac\otimes \ac),~D_\ac=T_\ac D_{\ac}$. As we have already proved $\ac$ is smooth as $\mathcal{B}$-algebra. It follows from propositions (\ref{prop4.1.1.},\ref{prop4.1.2.}) that there is an isomorphism between all double  Poisson brackets and $D_\ac^2$.

On the other hand, we know that Hochschild cohomology is Morita invariant. It is also well-known that for an associative algebra $C$ and a $C$-(bi)module $M$, the first  Hochschild cohomology space is given by the quotient of all $M$-valued derivations of $C$ modulo inner derivations:
\[
HH^1(C,M)=Der(\ac,M)/Inn(\ac,M)=Out(\ac,M)
\]
On the other hand, by Morita equivalence, 
\[
HH^1(\ac,\ac\otimes \ac)\cong HH^1(\mathcal{B},\ac\otimes\ac)=0,
\]
hence all derivations in $D_{\ac}$ are inner.
\end{proof}

In particular, for this case, it follows that double brackets on matrix algebras are inner, i.e. they are linear combinations of the following expressions for some $A,B\in \mathrm{Mat}_n(\mathbb{K})$:
\[
\begin{split}
\dbl x,y\dbr_{A\wedge B}&=[[A\otimes B-B\otimes A,x]_{in},y]_{out}\\
&= Ax\otimes By - yAx\otimes B - A\otimes xBy + yA\otimes xB\\
&\quad -Bx\otimes Ay + yBx\otimes A + B\otimes xAy - yB\otimes xA .
\end{split}
\]
As a simple corollary, we get the following 
\begin{corollary}
 All double Poisson brackets on all finite-dimensional semisimple algebras over algebraically closed field $\bar{\mathbb{K}}$ (in particular, on all finitely generated group algebras over $\mathbb C$) are inner.
\end{corollary}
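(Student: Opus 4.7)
I would argue that every essential ingredient of the previous proposition carries over verbatim to any finite-dimensional semisimple $\bar{\mathbb K}$-algebra $\ac$. By the Wedderburn--Artin theorem such an $\ac$ decomposes as $\bigoplus_i \mathrm{Mat}_{n_i}(\bar{\mathbb K})$, but the property that I actually want to extract from this decomposition is that $\ac$ is \emph{separable} over $\bar{\mathbb K}$: equivalently, the enveloping algebra $\ac^e = \ac \otimes \ac^\circ$ is semisimple and $\ac$ itself is projective as an $\ac^e$-module. This single observation will play the role of both the Morita reduction and the explicit splitting used in the matrix case.

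First I would deduce smoothness. Since $\ac$ is projective as an $\ac^e$-module, the short exact sequence
\[
0 \longrightarrow \Omega(\ac) \longrightarrow \ac \otimes \ac \xrightarrow{\ m\ } \ac \longrightarrow 0
\]
of $\ac^e$-modules splits, so $\Omega(\ac)$ is a direct summand of the free bimodule $\ac \otimes \ac$, and hence projective. Thus $\ac$ is smooth in the sense of Van den Bergh, and Proposition~\ref{prop4.1.2.} identifies all double brackets on $\ac$ with the bivectors in $D_\ac^2 = D_\ac \otimes_\ac D_\ac$.

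Next I would show that every double vector field in $D_\ac$ is inner. Projectivity of $\ac$ over $\ac^e$ gives $HH^n(\ac, M) = \mathrm{Ext}^n_{\ac^e}(\ac, M) = 0$ for every $n \ge 1$ and every $\ac$-bimodule $M$; in particular $HH^1(\ac, \ac \otimes \ac) = 0$, which is exactly the statement $\mathrm{Der}(\ac, \ac\otimes\ac) = \mathrm{Inn}(\ac, \ac\otimes\ac)$. A tensor product of inner derivations is an inner bivector in the sense of the subsection on inner double brackets, so the explicit formula recalled there writes every resulting double bracket as $\dbl -,-\dbr_r$ for some $r \in \Lambda^2 \ac$; in particular this applies to all double Poisson brackets.

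The argument is essentially a transcription of the matrix case, so I do not anticipate a serious obstacle: the only structural input needed is the observation that a finite-dimensional semisimple algebra over an algebraically closed field is separable, and this single property delivers both the smoothness and the vanishing of $HH^1$ in one stroke. The parenthetical remark about group algebras is then an immediate consequence of Maschke's theorem, which guarantees that $\mathbb C[G]$ is semisimple whenever it is finite-dimensional.
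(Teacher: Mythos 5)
Your argument is correct and arrives at the same two intermediate facts the paper uses --- smoothness of \ac\ and the vanishing of $HH^1(\ac,\ac\otimes\ac)$ --- but by a different key lemma. The paper's proof runs through Morita theory: it decomposes $\ac\cong\bigoplus_i\mathrm{Mat}_{n_i}(\bar{\mathbb K})$ by Wedderburn--Artin, notes that this algebra is Morita equivalent to $\bar{\mathbb K}^N$, and invokes Morita invariance of Hochschild cohomology to get $HH^1(\ac,\ac\otimes\ac)=0$ (with smoothness inherited from the matrix-algebra case, where the splitting of $0\to\Omega(\ac)\to\ac\otimes\ac\to\ac\to0$ was obtained by identifying $\ac\otimes\ac^\circ$ with $\mathrm{Mat}_{n^2}(\mathbb K)$). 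You instead extract from the Wedderburn--Artin decomposition the single fact that \ac\ is separable, i.e.\ projective as an $\ac\otimes\ac^\circ$-module, and let that do both jobs at once: it splits the defining sequence of $\Omega(\ac)$, giving smoothness, and it forces $HH^n(\ac,M)=\mathrm{Ext}^n_{\ac\otimes\ac^\circ}(\ac,M)=0$ for all $n\ge 1$ and all bimodules $M$, so every element of $D_\ac=\mathrm{Der}(\ac,\ac\otimes\ac)$ is inner and, via Proposition~\ref{prop4.1.2.} and the formula $\dbl-,-\dbr_r=[[r,-]_{in},-]_{out}$, every double bracket is inner. Your route is somewhat more self-contained (no appeal to Morita invariance of $HH^*$, and it yields the stronger vanishing in all positive degrees uniformly in the coefficients), while the paper's route has the advantage of directly reusing the machinery already set up for $\mathrm{Mat}_n(\mathbb K)$; the two are equally valid, and your closing remark reducing the group-algebra case to Maschke's theorem matches the paper's parenthetical claim.
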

\begin{proof}

Recall that all semisimple algebras over algebraically closed fields are equal to direct sums of group algebras, i.e. 
\[
\ac\cong \bigoplus_{i=1}^{N}\mathrm{Mat}_{n_i}(\bar{\mathbb{K}})
\]
for some $N,n_i\in\mathbb{N}$. Hence these algebras are smooth and are Morita equivalent to the algebra $\bar{\mathbb{K}}^N$ with coordinate-wise product. Hence their first Hochschild cohomology is given by direct sums of
\[
HH^1(\bar{\mathbb{K}},\bar{\mathbb{K}}\otimes\bar{\mathbb{K}})=0.
\]
\end{proof}

From the corollary we deduce that for finite dimensional algebras, interesting (non-inner) examples of double Poisson brackets can be found only for non-semisimple algebras. This is the motivation for the next
sections, where we give the description of double Poisson structures on a non-semisimple algebra.

\section{Example of a Double Poisson bracket on a non-semisimple algebra}
The simplest nontrivial example of a finite-dimensional non-semisimple algebra, is the algebra $\ac_2$ of real or complex upper triangular matrices of size $2\times 2$. It has $3$ linear generators 
\[
e_0=\begin{pmatrix}0 & 1\\ 0 & 0\end{pmatrix}, e_1=\begin{pmatrix}1 & 0\\ 0 & 0\end{pmatrix}, e_2=\begin{pmatrix}0 & 0\\ 0 & 1\end{pmatrix},
\]
which satisfy the following system of equations:
\begin{itemize}
    \item $e_1+e_2=1$,
    \item $e_1^2=e_1~,~~e_2^2=e_2~,~~e_1e_0=e_0e_2=e_0$,
    \item $e_0e_1=e_2e_0=e_0^2=0$.
\end{itemize}
This algebra is clearly isomorphic to the quiver algebra of the oriented graph
\[
\bullet\to\bullet
\] 
so that the idempotents $e_1,\,e_2$ correspond to the vertices and the element $e_0$ corresponds to the arrow. This algebra is non-semisimple, due to the existence of nilpotent ideal $\langle e_0\rangle$ in it.

\subsection{Computation for double Poisson bracket}
In order to describe the double brackets in $\ac_2$, we take the basis $e_0,e_1,e_2$ in $\ac_2$ and the corresponding basis $\ac_2\otimes\ac_2$. 
For our purposes it will be convenient to sometimes add the element $1= e_1+e_2$ to this basis; then the double bracket is determined by the formula
\[
\dbl e_i,e_j \dbr= C_{ij}^{\alpha\beta} e_\alpha\otimes e_\beta.
\]
\begin{remark}\label{remac2}
The advantage of introducing the unit $1$ in the list of basis elements is caused by the formula 
\[
\dbl 1,x\dbr=\dbl x,1\dbr=0\ \mbox{for all}\ x\in\ac_2,
\] 
which is the direct consequence of the Leibniz rule. Also we have relation
 \[
 C^{2j}e_2\otimes e_j=C^{2j} 1 \otimes e_j - C^{2j} e_1\otimes e_j.
 \] 
So we will consider the basis $\{1\otimes 1,e_0\otimes 1 ,1\otimes e_0,e_0\otimes e_0, 1\otimes e_1,e_1\otimes 1, e_0\otimes e_1,e_1\otimes e_0, e_1\otimes e_1\}$ of $\ac_2\otimes\ac_2$. 
For the same reason, it is enough to consider only the double brackets $\dbl e_0,e_0 \dbr,\dbl e_0,e_1 \dbr,\dbl e_1,e_1 \dbr$; in all the remaining cases we have
    \[
    \begin{split}
    \dbl e_2,e_2 \dbr&=\dbl 1-e_1,1-e_1 \dbr=\dbl e_1,e_1 \dbr,\\
    \dbl e_0,e_2 \dbr&=\dbl e_0,1-e_1 \dbr=-\dbl e_0,e_1 \dbr,\\
    \dbl e_1,e_2 \dbr&=\dbl e_1,1-e_1 \dbr=-\dbl e_1,e_1 \dbr. 
    \end{split}
    \]
\end{remark}
We are about to solve the equations on coefficients $C_{ij}^{\alpha\beta}$; as we remarked above, the general case is reduced to the situation when $i,j=0,1$. Then we have the following relations:
\begin{itemize}
\item Because of skew-symmetricity, we have $C_{ij}^{\alpha\beta}=-C_{ji}^{\beta\alpha}$, so $C_{ii}^{\alpha\alpha}=0$.
\item Since $1=e_1+e_2$, we have
\[
\dbl e_i ,1 \dbr=0=\dbl e_i,e_1 \dbr+\dbl e_i,e_2 \dbr\implies \dbl e_i,e_1 \dbr=-\dbl e_i ,e_2 \dbr,
\]
which means $C_{i1}^{\alpha\beta}=-C_{i2}^{\alpha\beta}$. In particular, due to the skew-symmetricity this gives
\[\begin{aligned}
C_{11}^{\alpha\alpha}=C_{12}^{\alpha\alpha}&=C_{21}^{\alpha\alpha}=C_{22}^{\alpha\alpha}=0,\\
C_{01}^{\alpha\beta}&=-C_{02}^{\alpha\beta}.
\end{aligned}
\]
\item Again by skew-symmetricity $C_{ii}^{\alpha\beta}=-C_{ii}^{\beta\alpha}$.

Here and below the index $\cdot$ is used to represent the element $1$ in the tensor product (see remark \ref{remac2}).
\end{itemize}
Now, let us solve the equations that guarantee Leibniz rule in the second argument (then the corresponding identity for the first argument will follow by skew-symmetricity):
\begin{itemize}
\item $\dbl e_i , e_1\dbr=\dbl e_i,e_1^2 \dbr=e_1\dbl e_i ,e_1 \dbr+\dbl e_i,e_1 \dbr e_1$.\\    
    {\textbf{i=1.}} In this case we get $C_{11}^{\cdot 0}=C_{11}^{01}=0$.

    {\textbf{i=0.}} For this case we conclude $C_{01}^{\cdot\cdot}=C_{01}^{\cdot 0}=0,$ $C_{01}^{0\cdot}=-C_{01}^{01}$, $C_{01}^{11}=-C_{01}^{\cdot 1}-C_{01}^{1\cdot}$.

    \item $\dbl e_i,e_0^2 \dbr=\dbl e_i,0 \dbr=0=e_0\dbl e_i ,e_0 \dbr+\dbl e_i,e_0 \dbr e_0$

    {\textbf{i=0.}} Here we have $C_{00}^{\cdot 1}=C_{00}^{01}=0$.

    {\textbf{i=1.}} For this case we get $C_{01}^{1\cdot}=0$, $C_{01}^{0\cdot}=-C_{01}^{10}$.

    \item $\dbl e_i,e_0 \dbr=\dbl e_i,e_1e_0 \dbr=e_1\dbl e_i,e_0 \dbr+\dbl e_i,e_1 \dbr e_0$

    {\textbf{i=1.}} In this case we get $C_{01}^{0\cdot}=-C_{11}^{\cdot 1}$.

    {\textbf{i=0.}} For this one, we get that $C_{01}^{\cdot 1}=C_{00}^{\cdot 0}$.
\item Finally, the equations
\[
\dbl e_0,0\dbr =0 = \dbl e_0,e_0e_1\dbr=e_0\dbl e_0,e_1\dbr+\dbl e_0,e_0\dbr e_1
\]
and
 \[
 \dbl e_1,0\dbr =0 = \dbl e_1,e_0e_1\dbr=e_0\dbl e_1,e_1\dbr+\dbl e_1,e_0\dbr e_1
 \]
already hold with the above conditions.
\end{itemize}

As far as the double Jacobi identity is concerned, we see that it holds automatically in some cases, namely:
\[
\dbl e_0,\dbl e_0,e_0\dbr\dbr_L + \tau_{(123)} \dbl e_0,\dbl e_0,e_0\dbr\dbr_L + \tau_{(132)} \dbl e_0,\dbl e_0,e_0\dbr\dbr_L=0,
\]
and
\[
\dbl e_1,\dbl e_1,e_1\dbr\dbr_L + \tau_{(123)} \dbl e_1,\dbl e_1,e_1\dbr\dbr_L + \tau_{(132)} \dbl e_1,\dbl e_1,e_1\dbr\dbr_L=0.
\]
It remains to check few remaining sets of the arguments: from the equations
\[
\dbl e_0,\dbl e_0,e_1\dbr\dbr_L + \tau_{(123)} \dbl e_0,\dbl e_1,e_0\dbr\dbr_L + \tau_{(132)} \dbl e_1,\dbl e_0,e_0\dbr\dbr_L,
\] and
\[
\dbl e_0,\dbl e_1,e_1\dbr\dbr_L + \tau_{(123)} \dbl e_1,\dbl e_1,e_0\dbr\dbr_L + \tau_{(132)} \dbl e_1,\dbl e_0,e_1\dbr\dbr_L=0,
\]
we get that the coefficients of the bracket should satisfy the condition ${C_{01}^{0\cdot}}^2=- C_{01}^{00}C_{01}^{\cdot 1}$.

Summing up we see that any double Poisson bracket on $\ac_2$ has the following form:
\begin{align*}
\dbl e_0,e_1 \dbr = \alpha(e_0 \otimes e_0) + \beta ( 1 \otimes e_1 &- e_1 \otimes e_1) + \gamma (e_0\otimes 1 - e_0 \otimes e_1 - e_1\otimes e_0),\\
\dbl e_0,e_0 \dbr &= \beta (1\otimes e_0 - e_0\otimes 1) ,\\
\dbl e_1,e_1 \dbr &= \gamma (e_1\otimes 1 - 1\otimes e_1), 
\end{align*}
with parameters $\alpha,\beta,\gamma \in \mathbb{K}$ satisfying the equation
\[
\gamma^2 +\alpha \beta = 0.
\]
All the other commutation relations follow from these three. In particular, we can take $\beta=\gamma=0$ and $\alpha\in\mathbb K$ arbitrary element, so that $\dbl e_0,e_1 \dbr_\alpha = \alpha(e_0 \otimes e_0)$, while all the other relations are trivial.
\begin{remark}
It is worth checking, if the double Poisson brackets on $\ac_2$ described above are inner, or not. For instance, let's take $r=e_0\wedge e_1$. In this case the expression $J(r)$ (see equation \eqref{eq:ybea}) vanishes, so the inner double bracket $\dbl,\dbr_{e_0\wedge e_1}$ satisfies the double Jacobi identity. By direct computations we get
\[
\begin{aligned}
\dbl e_0,e_0\dbr_{e_0\wedge e_1}&=0,\\
\dbl e_1,e_1\dbr_{e_0\wedge e_1}&=0,\\
\dbl e_0,e_1\dbr_{e_0\wedge e_1}&=e_0\otimes e_0.
\end{aligned}
\]
Also, if we take $r=1\wedge e_0$, then
\[
J(r)=-(e_0\otimes e_0\otimes 1+e_0\otimes 1\otimes e_0+1\otimes e_0\otimes e_0)\neq0,
\]
however, by direct computation we have
\[
[[[ (1\otimes e_0\otimes e_0 + e_0\otimes e_0\otimes 1 + e_0\otimes 1\otimes e_0), x]_1, y]_2, z]_3=0,
\] 
for any $x,y,z\in \ac_2$. Indeed, it is enough to take $x,y,z=e_0,e_1$, since this expression will clearly vanish if any of the arguments is $1$. In this case we have
\[
\begin{aligned}
\dbl e_0,e_0\dbr_{1\wedge e_0}&=0,\\
\dbl e_1,e_1\dbr_{1\wedge e_0}&=0,\\
\dbl e_0,e_1\dbr_{1\wedge e_0}&=-2e_0\otimes e_0.
\end{aligned}
\]
We see that the double Poisson bracket $\dbl,\dbr_\alpha$, described above, is inner.

More generally, if we take a generic element $a 1\wedge e_0+b 1\wedge e_1+ce_0\wedge e_1$ in $\Lambda^2\ac_2$ and check the associative Yang-Baxter equation for it, we get the following system of equations
\[
\begin{aligned}
ac&=a^2,& ab&=0, & ab&=bc, & b^2&=0. 
\end{aligned}
\]
Clearly, the solution of this system is $(0,0,\lambda)$ or $(\mu,0,\mu)$, where $\lambda,\mu\in\Bbbk$ are any constants.
\end{remark}

\subsection{Poisson structure on $\mathrm{Rep}_2(\ac_2)$}
The purpose of this and the next section is to describe the Poisson structure, corresponding to the double Poisson structure $\dbl,\dbr_\alpha$ on the algebra $\ac_2$. Clearly, there are only two one-dimensional representations of the algebra $\ac_2$ (depending on which of the idempotents $e_1,\,e_2$ is sent to $1$ and which is sent to $0$). Hence we begin with real $2$-dimensional representations of $\ac_2$, the space $\mathrm{Rep}_2(\ac_2)$.

\subsubsection{Coordinate ring}
First, consider $e_0$. Since $e_0^2=0$, every $2$-dimensional representation $\rho$ should send it to a $2\times 2$ matrix of rank at most one; any such matrix is equal to the product of two column-vectors $uv^T$, $u^T=(x_1,x_2)$ and $v^T=(y_1,y_2)$. Then the equality $e_0^2=0$ translates as $u\perp v$; without loss of generality we can fix $\|u\|=1$. Also, because $e_2=1-e_1$ we see that in addition to the matrix $\rho(e_0)$ it is enough to fix the matrix $\rho(e_1)$. Since $e_1^2=e_1$, we see that $\rho(e_1)$ is a projector and since $e_1e_0=e_0, e_0e_1=0$, we see that the image of $\rho(e_1)$ is equal to the linear span of $u$, so $\rho(e_1)=uw^T$ for some vector $w=(z_1,z_2)$ such that $(u,w)=1$.



Summing up, we get the following description of $\mathrm{Rep}_2(\ac_2)$ as a real affine variety: every point in the representation space is determined by a $6$-tuple of real parameters, grouped up in three real $2$-vectors, $u=(x_1, x_2),\,v=(y_1,y_2)$ and $w=(z_1,z_2)$, satisfying the conditions
\[
\begin{aligned}
x_1y_1+x_2y_2&=0, & x_1z_1+x_2z_2-1&=0, & x_1^2+x_2^2-1&=0.
\end{aligned}
\]
Clearly, the solutions of this system can be (locally in a generic point) parametrized by three real numbers: $\theta\in[0,2\pi]$ and $\lambda,\mu\in\mathbb R$ so that
\[
\begin{aligned}
u&=(\cos(\theta),\sin(\theta)),\ v=\lambda(-\sin(\theta),\cos(\theta)),\\
 &\,\, w=(\cos(\theta)-\mu\sin(\theta), \sin(\theta)+\mu\cos(\theta)).
\end{aligned}
\]
Below we will describe the Poisson structure on $\mathrm{Rep}_2(\ac_2)$ in terms of these parameters.

\subsubsection{Induced Poisson structure}
We are now going to describe the Poisson structure on $\mathrm{Rep}_2(\ac_2)$ induced by the double Poisson bracket $\dbl,\dbr_\alpha$. Abbreviating $C_{01}^{00}=A$, we have
\[
\{ (e_0)_{ij},(e_1)_{pq} \}
=\dbl e_0 , e_1 \dbr_{pj}' \dbl e_0 , e_1 \dbr_{iq}''
=A(e_0)_{pj}(e_0)_{iq} ,
\]
and all the other brackets of the coordinate functions $(e_a)_{ij},\,a=0,1,2$ are trivial. Recall that in terms of coordinates $(x_1,x_2)$ of vector $u$ and parameters $\lambda,\,\mu$:
\[ 
\rho(e_0)=\begin{pmatrix}
    -\lambda x_1x_2 & \lambda x_1^2 \\ -\lambda x_2^2 & \lambda x_1x_2
\end{pmatrix},\ \rho(e_1)=\begin{pmatrix}
    x_1^2-\mu x_1x_2 & x_1x_2+\mu x_1^2 \\ x_1x_2-\mu x_2^2 & x_2^2+ \mu x_1x_2
\end{pmatrix} 
 \]
 and $(x_1,x_2)=(\cos(\theta),\sin(\theta))$. 
Now from
\[ 
\{ (e_0)_{22},(e_0)_{12} \}=\{ \lambda x_1x_2,\lambda x_1^2 \}=0, 
\]
we deduce
\begin{equation}
\{ \theta,\lambda \}=0.
\end{equation}
Now, to determine $\{\mu,\theta\}$, we recall that for $i,j=1,2$,
\[
\begin{split}
&\{ (e_1)_{12},(e_1)_{ij} \}=\{ x_1x_2+\mu x_1^2,(e_1)_{ij} \}=0,\\
&\{ (e_1)_{21},(e_1)_{ij} \}=\{ x_1x_2-\mu x_2^2,(e_1)_{ij} \}=0,\\
&\{(e_1)_{12}-(e_1)_{21}, (e_1)_{ij}\}=\{ \mu,(e_1)_{ij} \} =0.
\end{split}
\]
This gives the relation
\begin{equation} 
\{ \mu, \theta\}=0.
\end{equation}
Finally, since $\{x_1,x_2\}=\{ \theta,\lambda \}=\{\theta ,\mu \}=0$, we get
\[
\begin{split}
\{(e_0)_{11},(e_1)_{11} \}&=\{-\lambda x_1x_2 , x_1^2-\mu x_1x_2 \}=x_1^2x_2^2\{ \lambda , \mu\}\\
 &=A(e_0)_{11}(e_0)_{11}=A\lambda^2x_1^2x_2^2
 \end{split}
\]
and hence
\begin{equation}
 \{\lambda,\mu \}=A\lambda^2.
\end{equation}

We can also write down the bivector $\pi$, that corresponds to this Poisson structure in matrix form in coordinates $(\theta,\lambda,\mu)$ on representation space:
\[
\pi=\begin{pmatrix}
    0 & 0 & 0 \\ 0 & 0 & A\lambda^2 \\ 0 & -A\lambda^2 & 0
\end{pmatrix}.
\]

\subsection{Poisson structure on $\mathrm{Rep}_3(\ac)$}

In order to get more insight into this structure in higher dimensions, we compute the Poisson structure for dimension three for real representations of $\ac_2$; our computations are similar to the ones we did in dimension two, so we skip all but major steps.

\subsubsection{Coordinate ring}
As in the case of dimension two, the element $e_0$, due to nilpotency, can be mapped only to a matrix of rank no more than one and the image of $e_1$ is a projection matrix $P$ whose image contains the image of $\rho(e_0)$. We need to consider two cases $\mathrm{rk}\,P=1$ and $\mathrm{rk}\,P=2$.

From these and other conditions, we see that a generic $3$-dimensional representation of $\ac_2$ is a point in the following variety.
\begin{itemize}
    \item $\mathrm{rk}\,P=1$.\\ 
    In this case the representations, just like before, are parametrized by three real $3$-vectors: $u=(x_1,x_2,x_3),\,v=(y_1,y_2,y_3),\,w=(z_1,z_2,z_3)$, satisfying the conditions
\[
x_1y_1+x_2y_2+x_3y_3=0,\ x_1z_1+x_2z_2+x_3z_3-1=0,\ x_1^2+x_2^2+x_3^2-1=0.
\]

    \item $\mathrm{rk}\,P=2$.\\
In this case, the equations are the same, although the role of vector-valued parameters is different: rather than parametrizing the points in affine space they deal with the elements in the dual space.
\end{itemize}
Take the case $\mathrm{rk}\,P=1$; then $u$ is a point in the unit sphere $S^2$; then $v$ lies in a plane, orthogonal to $u$ and passing through the origin and $w$ is in the tangent plane of $S^2$ at $u$.

\subsubsection{Induced Poisson structure}
As before, setting $C_{01}^{00}=A$, we have the following commutation relations:
\[
\begin{split}
\{x_iy_j,x_pz_q\}&=\{ (e_0)_{ij},(e_1)_{pq} \}=\lambda x_iy_jx_py_q,\\
\{ x_iy_j,x_py_q \}&=\{ (e_0)_{ij},(e_0)_{pq} \}=0,\\
\{ x_iz_j,x_pz_q \}&=\{ (e_1)_{ij},(e_1)_{pq} \}=0.
\end{split}
\]

As we explained at the end of the previous section the representation space can be described as a triplet of points: one point in the unit sphere a point in the tangent plane and a point in the plane passing through the origin. So we see that there are six free parameters $\theta,\phi,\alpha,\beta,\gamma,\delta$ that describe every representation: ${\theta\in [0,2\pi)}$ and ${\phi\in[-\pi /2,\pi /2}]$ are the spherical coordinates on $S^2$, while $\alpha,\beta,\gamma,\delta$ parametrize the planes. In terms of these coordinates,
\[
\begin{aligned}
(x_1,x_2,x_3)&=(\cos(\theta)\cos(\phi),\sin(\theta)\cos(\phi),\sin(\phi)),\\
(y_1,y_2,y_3)&=(\alpha f_{11} + \beta f_{12},\alpha f_{21} + \beta f_{22},\alpha f_{31} + \beta f_{32}),\\
(z_1,z_2,z_3)&=(x_1+\gamma f_{11} + \delta f_{12},x_2 \gamma f_{21} + \delta f_{22},x_3 + \gamma f_{31} + \delta f_{32})),
\end{aligned}
\]
where $f_{ij}$ are certain explicit functions of $\theta,\phi$ (coordinates of the basis vectors in the planes).

Using these parametrizations we get the following result:
\[\{ \alpha,\gamma \}=\lambda \alpha^2,~\{\alpha,\delta\}=\lambda \alpha\beta,~\{\beta,\gamma\}=\lambda \alpha\beta,~\{\beta,\delta\}=\lambda \beta^2,\]
and all the rest of the coordinate brackets vanish.


We can also represent the Poisson structure in matrix form in coordinates $(\theta,\phi,\alpha,\beta,\gamma,\delta)$ on representation space:
\[
\pi=\begin{pmatrix}
    0 & 0 & 0 & 0 & 0 & 0 \\ 0 & 0 & 0 & 0 & 0 & 0 \\ 
    0 & 0 & 0 & 0 & \lambda \alpha^2 & \lambda \alpha\beta \\ 0 & 0 & 0 & 0 & \lambda \alpha\beta & \lambda \beta^2 \\ 0 & 0 & -\lambda\alpha^2 & -\lambda \alpha\beta & 0 & 0 \\ 0 & 0 & -\lambda\alpha\beta & -\lambda \beta^2 & 0 & 0
\end{pmatrix}.
\]

\section{Discussion}
Given the results for the non-semisimple example, the following step 
is to find an integrable system that can be modeled through this algebra. On the other hand, the simplicity of this
non-semisimple example brings some insight and some questions about double Poisson structures on finite-dimensional non-semisimple algebras, for instance, the
invariance of the rank of the induced Poisson structure.
Further studies may take place with respect to universal enveloping algebras and even the relation of double Poisson brackets with the deformation quantization of an
algebra.

\appendix
\section{Modified double Poisson bracket on $\ac_2$}

In this appendix, we give an example of the modified double Pousson brackets in the sense of Arthamonov (see definition \ref{df:modifieddp}) on the algebra $\ac_2$ of real upper-triangular $2\times 2$-matrices. To this end, we consider the basis $1,e_0,e_1$ in $\ac_2$. We use notation similar to the one we used in the discussion of double Poisson brackets and expand every modified bracket $\dbl e_i,e_j \dbr$ in terms of the corresponding basis in the tensor square of $\ac_2$. As before, the Leibniz rule guarantees that the modified bracket, involving $1$ should be equal to $0$, so we have to consider only the expressions $\dbl e_i,e_j \dbr$ with $i,j=0,1$.



First, we choose the coefficients of $\dbl,\dbr$ so that the Leibniz rules would hold. After that, we verify the $H_0-$skewsymmetry condition, and finally, the Jacobi identity.

\paragraph{Leibniz rules.}
After the calculations, we see that the bracket has to be of the following form:
\begin{equation}\label{mdpb}
\begin{aligned}
    \bullet&\ \dbl e_0,e_0 \dbr=\alpha e_0\otimes e_0 + \beta 1\otimes e_0 + \gamma e_0\otimes 1- (\beta + \gamma )e_0\otimes e_1\\
             &\qquad\qquad\quad  -(\beta + \gamma) e_1 \otimes e_0,\\
    \bullet&\ \dbl e_0,e_1 \dbr=\delta e_0\otimes e_0 + \kappa e_0\otimes 1 + \beta 1\otimes e_1- \kappa e_0\otimes e_1\\
              &\qquad\qquad\quad  -\kappa e_1\otimes e_0 -\beta e_1\otimes e_1,\\
    \bullet&\ \dbl e_1,e_0 \dbr=\iota e_0\otimes e_0 + \gamma e_1\otimes 1 - \gamma e_1\otimes e_1\\,
    \bullet&\ \dbl e_1,e_1 \dbr=\eta e_0\otimes e_0 + \kappa e_1\otimes 1 - \kappa e_1\otimes e_1.    
\end{aligned}
\end{equation}
where $\alpha,\beta,\gamma,\delta,\iota,\kappa,\eta \in \mathbb{K}$ are some constants.

\paragraph{$H_0-$skewsymmetry.} We want  $\{a,b\}+\{b,a\}\in [\ac_2,\ac_2]$ for any $a,b\in \ac$, where $\{-,-\}=m\circ \dbl -,-\dbr$, and $m$ the product of the elements in $\ac_2$.
We have
\begin{itemize}
    \item $\{e_0,e_0\}+\{e_0,e_0\}=2\{e_0,e_0\}=2(\beta e_0+\gamma e_0 - (\beta + \gamma) e_0)=0\in [\ac,\ac]$,
    \item $\{e_1,e_1\}+\{e_1,e_1\}=2\{e_1,e_1\}=2(\kappa e_1 - \kappa e_1)=0\in [\ac,\ac]$,
    \item $\{e_0,e_1\}+\{ e_1,e_0\}= (\kappa e_0 + \beta e_1 -\kappa e_0 - \beta e_1)+(\gamma e_1 - \gamma e_1)=0\in [\ac,\ac]$.
\end{itemize}
So the skew-symmetricity condition holds ``on the nose''.

\paragraph{Jacobi identity.}Since, as we just saw, $\{m,n\}=0$, for any $m,n\in \ac_2$, the Jacobi identity for the bracket $\{,\}$ is automatically satisfied. 

\medskip
So we conclude that all the modified double Poisson brackets in the algebra of upper triangular $2\times 2$-matrices have the form \eqref{mdpb}.

\bibliographystyle{IEEEtran}
\bibliography{references.bib}

\end{document}